 \newtheorem{thm}{Theorem}[section]
 \theoremstyle{definition}
 \theoremstyle{remark}
 \newtheorem{rem}[thm]{Remark}
 \newtheorem*{ex}{Example}
 \numberwithin{equation}{section}
\def\o#1{\ensuremath \mathcal{#1}}
\begin{document}

%
%
%
%
%
%
%
%
%
\title[Darboux Transformations]
 {Invariants for Darboux transformations \\
 of Arbitrary Order for \\
  $D_x D_y +aD_x + bD_y +c$}
\author[Ekaterina Shemyakova]{Ekaterina Shemyakova}

\address{%
Department of Mathematics\\
SUNY at New Paltz\\
1 Hawk Dr. New Paltz, NY 12561}

\email{shemyake@newpaltz.edu}
\subjclass{Primary 70H06; Secondary 34A26}

\keywords{Joint differential invariants, gauge transformations, bivariate linear partial differential operator of arbitrary order.}


\begin{abstract}
We develop the method of regularized moving frames of Fels and Olver to obtain explicit
general formulas for the basis invariants that generate 
all the joint differential invariants, under gauge transformations, for the operators
\[
\o{L}=D_xD_y +a(x,y) D_x + b(x,y) D_y +c(x,y)
\]
and an operator of arbitrary order.

The problem appeared in connection with invariant construction of Darboux transformations for $\o{L}$.
\end{abstract}

\maketitle
\section{Introduction}
The present paper is devoted to Darboux transformations~\cite{ts:steklov_etc:00} for the Laplace operator,
\begin{equation} \label{op:L2}
\o{L} = D_xD_y +a D_x + bD_y +c \  ,
\end{equation}
where $a,b,c \in K$, where $K$ is some differential field (see Sec.~\ref{sec:pre}). Operator $\o{L}$ is transformed into operator $\o{L}_1$ with the same principal symbol
 (see Sec.~\ref{sec:pre}) by means of operator $\o{M}$  if there is a linear partial differential operator $\o{N}$ such that
\begin{equation} \label{main}
\o{N} \o{L} = \o{L}_1 \o{M} \  .
\end{equation}
In this case we shall say that there is a \textit{Darboux transformation for pair} $(\o{L},\o{M})$;
we also say that $\o{L}$ \textit{admits a Darboux transformation generated by} $\o{M}$.
We define \textit{the order of a Darboux transformation} as the order of the $\o{M}$ corresponding to it.

Given some operator $\o{R} \in K[D]$ and an invertible function $g \in K$, the corresponding
\textit{gauge transformation} is defined as
\[
 \o{R} \to \o{R}^{g} \ , \ \o{R}^g = g^{-1}  R g \  .
\]
The principal symbol of an operator in $K[D]$ is invariant under the gauge transformations.
One can prove~\cite{shem:darboux2}  that if a Darboux transformation exists for a pair $(\o{L},\o{M})$,
then a Darboux transformation exists for the pair $(\o{L}^g,\o{M}^g)$.
Therefore, Darboux transformations can be considered  for the equivalence classes of the pairs~$(\o{L},\o{M})$.

A function of the coefficients of an operator in $K[D]$ and of the derivatives of the coefficients of the operator is called a \textit{differential invariant}
with respect to the gauge transformations if it is unaltered under the action of the gauge transformations.  For several operators,
we can consider \textit{joint differential invariants}, which are functions of all their coefficients and of the derivatives of these coefficients.
Differential invariants form a differential algebra over $K$. This algebra may be $D$-generated over $K$ by some number of basis invariants.
We say that these basis invariants form a \textit{generating set of invariants}.  
For operators  of the form~\eqref{op:L2} known as \textit{Laplace invariants}
functions $k=b_y + ab -c$ and $h=a_x + ab -c$ form a generating set of invariants.

In~\cite{movingframes} we developed the regularized moving
frames of Fels and Olver~\cite{FO2,olver2011diff_inv_algebras,Mansf_book} for the individual linear partial differential operators of orders $2$ and $3$ on the plane under
gauge transformations and obtained generating sets of invariants for those operators.
In the present paper we extend those ideas and show that there is a finite generating set of invariants
for the pairs~$(\o{L},\o{M})$, where $\o{L}$ is of the form~\eqref{op:L2} and $\o{M} \in K[D]$ is of arbitrary form and of order $d$.
For $\o{M}$ of arbitrary order $d$ but given in its normalized form without mixed derivatives, we find explicit general formulas for the basis invariants
for the pairs~$(\o{L},\o{M})$ (Theorem~\ref{thm:1}).

The existence of such normalized forms for $\o{M}$ is implied by one of the theorems proved in~\cite{shem:darboux2},
which can be re-formulated as follows.
\begin{thm} \cite{shem:darboux2}
Let there be a Darboux transformation for pair $(\o{L},\o{M})$, where
$\o{L}$ be of the form (\ref{op:L2}) and $\o{M} \in K[D]$ of arbitrary form and order.
Then there is a Darboux transformation for pair $(\o{L},\o{M}')$, where
$\o{M}'$ contains no mixed derivatives.
\end{thm}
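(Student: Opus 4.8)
\emph{Proof proposal.}
The plan is to reduce $\o{M}$ modulo the left ideal $K[D]\,\o{L}$, using that the principal symbol of $\o{L}$ is $D_xD_y$. The first step is to note that the Darboux property is stable under such a reduction: if $\o{N}\o{L}=\o{L}_1\o{M}$ and we put $\o{M}'=\o{M}-\o{T}\o{L}$ for some $\o{T}\in K[D]$, then
\[
\o{L}_1\o{M}' \;=\; \o{L}_1\o{M}-\o{L}_1\o{T}\o{L} \;=\; (\o{N}-\o{L}_1\o{T})\,\o{L},
\]
so $\o{M}'$ generates a Darboux transformation for $\o{L}$ with the \emph{same} $\o{L}_1$ and with $\o{N}'=\o{N}-\o{L}_1\o{T}$. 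Hence it suffices to show that, after subtracting a suitable left-multiple of $\o{L}$, the operator $\o{M}$ can be made free of mixed derivatives.

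For the reduction itself, write $\o{M}=\sum_{i,j} m_{ij}D_x^iD_y^j$ and call a monomial $D_x^iD_y^j$ \emph{mixed} when $i\ge 1$ and $j\ge 1$. Pick a mixed monomial $D_x^iD_y^j$ occurring in $\o{M}$ whose total order $n=i+j$ is maximal among mixed monomials, and replace $\o{M}$ by $\o{M}-m_{ij}\,D_x^{i-1}D_y^{j-1}\o{L}$. A direct Leibniz computation shows that $D_x^{i-1}D_y^{j-1}\o{L}-D_x^iD_y^j$ has total order at most $n-1$; consequently this step removes the term $m_{ij}D_x^iD_y^j$, leaves all other order-$n$ monomials of $\o{M}$ unchanged, and can only create new monomials of total order $\le n-1$. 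Clearing the finitely many mixed monomials of order $n$, then those of order $n-1$, and so on down to order $2$ (the minimal total order of a mixed monomial), the procedure terminates and produces $\o{M}'=\o{M}-\o{T}\o{L}$ with $\o{T}\in K[D]$ and $\o{M}'$ supported only on the pure monomials $D_x^i$ and $D_y^j$. Combined with the previous paragraph, this yields the desired Darboux transformation for $(\o{L},\o{M}')$.

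I do not expect a genuine obstacle here; the two points needing attention are (i) the elementary verification that $D_x^{i-1}D_y^{j-1}\o{L}$ has leading monomial exactly $D_x^iD_y^j$ with every other term of strictly smaller total order, and (ii) arranging the iteration so that termination is manifest, which is why the mixed monomials are cleared in order of decreasing total order. Equivalently, the whole argument is the computation of the normal form of $\o{M}$ with respect to the one-element Gr\"obner basis $\{\o{L}\}$ of $K[D]$ for any monomial order in which $D_xD_y$ is the leading monomial of $\o{L}$: that normal form is automatically supported off the divisibility cone of $D_xD_y$, i.e.\ on pure powers of $D_x$ and of $D_y$. It is worth noting that the order of $\o{M}'$ may be strictly less than that of $\o{M}$ (precisely when the top-order part of $\o{M}$ is divisible by $D_xD_y$), which is compatible with the statement, since it only asserts the existence of such an $\o{M}'$.
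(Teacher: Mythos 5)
Your argument is correct, and it is essentially the standard proof: the paper itself does not prove this statement (it is quoted from \cite{shem:darboux2}), but the argument there is precisely your reduction of $\o{M}$ modulo the left ideal $K[D]\,\o{L}$, using that $\o{L}_1\o{M}'=(\o{N}-\o{L}_1\o{T})\o{L}$ preserves the Darboux property and that division by $\o{L}$ (leading monomial $D_xD_y$) leaves a remainder supported on pure powers of $D_x$ and $D_y$. The only point worth a remark is the degenerate case $\o{M}\in K[D]\,\o{L}$, where your $\o{M}'$ is $0$; this still satisfies the statement as written.
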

\section{Preliminaries}
\label{sec:pre}
Let $K$ be a differential field of characteristic zero, equipped with commuting derivations $\partial_x, \partial_y$.
Let $K[D]=K[D_x, D_y]$ be the corresponding ring of linear
partial differential operators over $K$, where $D_x, D_y$ correspond to derivations $\partial_x, \partial_y$.
One can either assume field $K$ to be differentially closed, in other words containing all the solutions
of, in general nonlinear, Partial Differential Equations (PDEs)
with coefficients in $K$, or simply assume that $K$ contains the solutions of those PDEs that we encounter on the way.

Let $f \in K$, and $\o{L} \in K[D]$; by $\o{L} f$ we denote the composition of operator $\o{L}$ with the operator of multiplication
by a function $f$, while  $\o{L} (f)$ mean the application of operator $\o{L}$ to $f$.
 The second lower index attached to a symbol denoting a function means the derivative of that function with respect to the variables listed there.
For example, $f_{1,xyy} = \partial_x \partial_x \partial_y f_1$.

In the present paper we use Bell polynomials,
\[
\begin{aligned}
&B_{n,k}(x_1,x_2,\dots,x_{n-k+1}) = \\
&                               \sum
\frac{n!}{j_1!j_2!\cdots j_{n-k+1}!} \left( \frac{x_1}{1!}\right)^{j_1} \left( \frac{x_2}{2!}\right)^{j_2}\cdots\left( \frac{x_{n-k+1}}{(n-k+1)!}\right)^{j_{n-k+1}} \ ,
\end{aligned}
\]
where the sum is over all sequences $j_1, j_2, j_3, \dots, j_{n-k+1}$ of non-negative integers such that
$j_1+j_2+\cdots = k$ and $j_1+2j_2+3j_3+\cdots=n$.
The sum
\[
 B_n(x_1,\dots,x_n)=\sum_{k=1}^n B_{n,k}(x_1,x_2,\dots,x_{n-k+1})
\]
is called the $n$-th \textit{complete Bell polynomial}, and also it has the following determinant representation:
\[
    B_n(x_1,\dots,x_n) = \det\begin{bmatrix}x_1 & {n-1 \choose 1} x_2 & {n-1 \choose 2}x_3 & {n-1 \choose 3} x_4  & \cdots & \cdots & x_n \\ \\
-1 & x_1 & {n-2 \choose 1} x_2 & {n-2 \choose 2} x_3  & \cdots & \cdots & x_{n-1} \\
\\ 0 & -1 & x_1 & {n-3 \choose 1} x_2  & \cdots & \cdots & x_{n-2} \\ \\
\vdots & \vdots & \vdots &  \vdots & \ddots & \ddots & \vdots \\
0 & 0 & 0  & 0 & \cdots & -1 & x_1 \end{bmatrix}.
\]
\section{Invariants for  Normalized Darboux Transformations of Arbitrary Order}
\begin{thm} \label{thm:1} All joint differential 
invariants\footnote{with respect to gauge transformations}
for the pairs $(\o{L}, \o{M})$, where 
$\o{L} = D_xD_y +a D_x + bD_y +c$ and 
${ \displaystyle \o{M}=\sum_{i=1}^{d} m_i D_x^i +m_{-i} D^i_y}+m_0$
and where $m_i \in K, i=-d, \dots, d$ and $a,b,c \in K$, can be generated by the following $2d+3$ basis invariants.
\begin{align*}
 m     &= a_x - b_y \ ,    \\
 h     &= ab-c + a_x \ ,   \\
 R_j   &= \sum_{w=j}^d m_w \binom{w}{j} B_{w-j}(-b, -\partial_x(b), - \partial_x^2(b), \dots, -\partial_x^{w-j-1}(b) ) \ , \\
 R_{-j}&= \sum_{w=j}^d m_{-w} \binom{w}{j} B_{w-j}(-a, -\partial_y(a), - \partial_y^2(a), \dots, -\partial_y^{w-j-1}(a) ) \ , \\
 R_0   &= \sum_{w=1}^d m_w B_{w}(-b, -\partial_x(b), - \partial_x^2(b), \dots, -\partial_x^{w-1}(b) ) \\
       & +   m_{-w} B_{w}(-a, -\partial_y(a),
 - \partial_y^2(a), \dots, -\partial_y^{w-1}(a) ) +m_0  \ .
\end{align*}
\end{thm}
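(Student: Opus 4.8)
The plan is to follow the regularized moving frames method of Fels and Olver, exactly as in \cite{movingframes} for a single operator, now carried out for the pair $(\o{L},\o{M})$; the new ingredient is that the gauge action on the coefficients of $\o{M}$ is governed by Bell polynomials. I would first record the prolonged action of a gauge transformation $g$ on the jet coordinates $a,b,c,m_{-d},\dots,m_d$ and their derivatives. Writing $\sigma=g_x/g$, $\tau=g_y/g$, and using the operator identity $D_x^{i}\circ g=\sum_{k=0}^{i}\binom{i}{k}g_{x^k}D_x^{i-k}$ together with the Fa\`a di Bruno formula $g_{x^k}/g=B_k(\sigma,\partial_x\sigma,\dots,\partial_x^{k-1}\sigma)$ and their $y$-analogues, one obtains
\[ a\mapsto a+\tau,\qquad b\mapsto b+\sigma,\qquad c\mapsto c+\sigma_y+\sigma\tau+a\sigma+b\tau, \]
\[ m_j\mapsto\sum_{w\ge j}m_w\binom{w}{j}B_{w-j}(\sigma,\partial_x\sigma,\dots),\qquad m_{-j}\mapsto\sum_{w\ge j}m_{-w}\binom{w}{j}B_{w-j}(\tau,\partial_y\tau,\dots), \]
together with $m_0\mapsto m_0+\sum_{w\ge1}\bigl(m_wB_w(\sigma,\partial_x\sigma,\dots)+m_{-w}B_w(\tau,\partial_y\tau,\dots)\bigr)$. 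The structural fact to keep in mind is that the transformed $D_x^{j}$-coefficient of $\o{M}^g$ depends only on $\sigma$ and its \emph{pure} $x$-derivatives, and the transformed $D_y^{j}$-coefficient only on $\tau$ and its pure $y$-derivatives.

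Next I would verify directly that $m=a_x-b_y$, $h=ab-c+a_x$ and the $R_{\pm j}$, $R_0$ are genuinely invariant. For $m$ and $h$ this is immediate from the above and the relation $\sigma_y=\tau_x$. For $R_j$, substitute the transformed $m_w$ and $b+\sigma$ into the defining sum and expand $B_{w-j}\bigl(-(b+\sigma),\dots\bigr)$ by the complete Bell polynomial addition law
\[ B_n(x_1+y_1,\dots,x_n+y_n)=\sum_{k=0}^{n}\binom{n}{k}B_k(x_1,\dots,x_k)\,B_{n-k}(y_1,\dots,y_{n-k}); \]
using $B_0=1$ and $B_n(0,\dots,0)=0$ for $n\ge1$, the resulting multiple sum collapses because $\sum_{t}\binom{N}{t}B_{N-t}(\sigma,\dots)B_t(-\sigma,\dots)=B_N(0,\dots,0)=\delta_{N,0}$, and exactly $R_j$ remains. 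The same manipulation proves the invariance of $R_{-j}$ (in $\tau$) and of $R_0$, whose $\sigma$- and $\tau$-contributions collapse separately.

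For completeness I would build a moving frame from the cross-section that normalizes $\widetilde b$ together with \emph{all} of its $x,y$-derivatives to $0$, and $\widetilde a$ together with \emph{only} its pure $y$-derivatives to $0$. At a jet point this is transverse and neither over- nor under-determined: the equations $\partial_x^{k-1}\partial_y^{\ell}\widetilde b=0$ and $\partial_y^{k-1}\widetilde a=0$ solve uniquely for every jet coordinate of $\log g$ except $\log g$ itself, which acts trivially, and there is no clash with $\sigma_y=\tau_x$ since at a point these are the same coordinate; one neither normalizes nor can normalize $\widetilde a_x$, which is forced to equal $m$. On this section the frame values are $\partial_x^{k}\sigma=-\partial_x^{k}b$ and $\partial_y^{k}\tau=-\partial_y^{k}a$, so by the structural fact of the first step the invariantizations of the coefficients of $\o{M}$ are precisely $\iota(m_j)=R_j$, $\iota(m_{-j})=R_{-j}$, $\iota(m_0)=R_0$, while $\iota(c)=m-h$ and the invariantizations of the remaining jet coordinates of $\o{L}$ lie in the differential algebra $D$-generated by $h$ and $m$ --- this last being the single-operator content of \cite{movingframes}. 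The Fundamental Theorem of moving frames \cite{FO2} then gives that all joint differential invariants are generated by the invariantized coordinates, hence by the $2+(2d+1)=2d+3$ functions $m,h,R_{-d},\dots,R_d$ together with their invariant derivatives, which by the identities of the second step reduce to ordinary $\partial_x,\partial_y$-derivatives of the same list.

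The step I expect to be the main obstacle is the construction and justification of this moving frame: showing that the asymmetric cross-section --- all derivatives of $\widetilde b$, but only the pure $y$-jet of $\widetilde a$ --- is genuinely transverse to the prolonged orbits of the gauge pseudo-group, and that $m$ is exactly the obstruction to normalizing $\widetilde a_x$. A self-contained alternative for the completeness half, avoiding frames, is a separation-of-orbits argument: the gauge $g=\exp(-\!\int b\,dx)$ makes $b\equiv0$, whence $R_j=m_j$; the residual freedom $g\mapsto g\,\rho(y)$ normalizes $a$ through $m$; and equality of $R_{-d},\dots,R_{-1},R_0,h$ then forces, inductively, equality of $m_{-d},\dots,m_{-1},m_0,c$ --- so the listed $2d+3$ invariants already separate the gauge-equivalence classes of pairs $(\o{L},\o{M})$.
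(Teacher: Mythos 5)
Your proposal is correct and its core is the paper's own route: the regularized moving--frame method, with the gauge action on the coefficients of $\o{M}$ computed via the Leibniz rule and the Fa\`a di Bruno formula and packaged into complete Bell polynomials, and a cross-section that kills the pure $x$-jet of the transformed $b$ and the pure $y$-jet of the transformed $a$, so that on the frame $\partial_x^k\alpha_x=-\partial_x^{k-1}(b)$, $\partial_y^k\alpha_y=-\partial_y^{k-1}(a)$ and the invariantized coefficients of $\o{M}$ are exactly the $R_j$. You do add two elements not in the paper. First, a direct, frame-free verification that the $R_j$ are invariant, via the binomial convolution law for complete Bell polynomials specialized to opposite arguments, $\sum_{t}\binom{N}{t}B_{N-t}(\sigma,\dots)B_t(-\sigma,\dots)=\delta_{N,0}$; combined with $\binom{v}{w}\binom{w}{j}=\binom{v}{j}\binom{v-j}{w-j}$ this does collapse the transformed sum back to $R_j$, and it is a worthwhile independent check since the paper relies entirely on the moving-frame machinery to guarantee invariance. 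Second, a sketched orbit-separation argument for completeness (gauge $b$ to zero, then exploit the residual $y$-dependent freedom), which is more elementary than citing Fels--Olver but is left as a sketch, in particular the step where the residual freedom ``normalizes $a$ through $m$'' would need to be spelled out. A minor structural difference: your cross-section assigns the mixed derivatives of $\alpha$ to the $b$-normalizations rather than to the $a$-normalizations as the paper does, so your leftover un-normalizable quantity is $\widetilde a_x\mapsto a_x-b_y=m$, matching the sign in the theorem statement (the paper's choice produces $b_y-a_x$); both cross-sections are transverse and yield the same generating set.
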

\begin{proof}
We adopt the method of regularized moving frames~\cite{FO2,olver2011diff_inv_algebras,Mansf_book}.
Possible difficulties with the infinite dimensional case are addressed in~\cite{OP:2008:inf_dim}.
In this short paper we refer the reader to these works for the rigorous notation and for a justification of the method.

For transformations $\o{L} \mapsto \o{L}^{\exp(\alpha)}$, which implies the following group action on the coefficients of the operator,
\[
 (a,b,c,) \to (a + \alpha_y , b + \alpha_x , c + a\alpha_x + b\alpha_y + \alpha_{xy} + \alpha_x \alpha_y)  
 \  ,
\]
consider the prolonged action.

Let us construct a frame
\[
 \rho: (a_J, b_J, c_J) \mapsto g \  ,
\]
at some regular point $(x^0,y^0)$.  
Here $a_J$ denotes the jet coefficients of $a$ at $(x^0,y^0)$,
and they are to be regarded as the independent group parameters.
A moving frame can be constructed through a normalization procedure based on a choice of a cross-section
to the group orbits. Here we define a cross-section  by normalization equations
\begin{align*}
 &a_{J} = 0 \ ,  \\
 &b_{X} = 0 \ ,
\end{align*}
where here $J$ is a string of the form $x \dots xy \dots y$, where $y$ has to be present at least once, and
there may be no $x$-s, and $X$ is a string of the form $x \dots x$, where there can be no $y$-s.
The normalization equations when solved for group parameters produces the moving frame section:
\begin{align*}
 & a_J = a_{J-y} \ ,  \\
 & b_X = b_{X-x} \ ,
\end{align*}
where $J-y$ means that we take one $y$ from the string $J$, and $X-x$ means that we take one $x$ from the string $X$.
The first two fundamental differential invariants can be then found:
\begin{align*}
&(b_1)_y \Big|_{\rho} =b_y + \alpha_{xy} = b_y - a_x = m = h-k \  , \\
& c_1 \Big|_{\rho}    =c - ab - ab -a_x +ab = c -a_x -ab = h \  .
\end{align*}

The remaining invariants of the generating set can be obtained using the constructed frame
for the group acting on all the coefficients of the second operator in the pair,  operator $\o{M}$
since none of them has been used during the normalization process and construction of the frame.

Consider a gauge transformation of each of the terms in the sum $\o{M}=\sum_{i=1}^{d} m_i D_x^i +m_{-i} D^i_y + m_0$:
\begin{align}
& \left(m_i D_x^i \right)^{\exp(\alpha)} =  \exp(-\alpha) \cdot m_i \cdot D_x^i \circ  \exp(\alpha) \ , \; i \neq 0 \ ,  \label{eq:mi_conj} \\
& \left(m_{-i} D_y^i \right)^{\exp(\alpha)} =  \exp(-\alpha) \cdot m_{-i} \cdot D_y^i \circ  \exp(\alpha) \ , \; i \neq 0 \ ,  \nonumber \\
& \left(m_0 \right)^{\exp(\alpha)} = m_0 \ . \nonumber
\end{align}
Using the General Leibnitz rule~\cite{olver1986applications}, for $i \neq 0$, we have
\[
 \left(m_i D_x^i \right)^{\exp(\alpha)} =  \exp(-\alpha) \cdot m_i \cdot \sum_{k=0}^i \binom{i}{k} \frac{\partial^{i-k}
                                      \exp(\alpha)}{\partial x^{i-k}} \cdot D^k_x \  ,
\]
then  applying Fa\`{a} di Bruno formula~\cite{stanley2000enumerative_combinatorics}  we continue
\begin{align*}
     \left(m_i D_x^i \right)^{\exp(\alpha)} =  & \exp(-\alpha) \cdot m_i \cdot \sum_{k=0}^i \binom{i}{k}
                     \sum_{t=1}^{i-k} \frac{\partial^{t}  \exp(\alpha)}{\partial \alpha^{t}} \\
                      & \cdot B_{i-k,t} \left(\partial_x(\alpha), \partial_x^2(\alpha), \dots, \partial_x^{i-k-t+1}(\alpha) \right)     \cdot D^k_x \\
                                        =& m_i \cdot \sum_{k=0}^i \binom{i}{k} \sum_{t=1}^{i-k} B_{i-k,t} (\partial_x(\alpha), \partial_x^2(\alpha), \dots, \partial_x^{i-k-t+1}(\alpha) )    \cdot D^k_x  \ ,
\end{align*}
where $B_{i-k,t}$ are Bell polynomials.
Since only the terms $B_{i-k,t}$ are summed with respect to $t$,  we  can rewrite the expression in terms of complete Bell polynomials:
\begin{equation}
\label{eq:1}
 \left(m_i D_x^i \right)^{\exp(\alpha)} = m_i \cdot \sum_{k=0}^i \binom{i}{k} B_{i-k} (\partial_x(\alpha), \partial_x^2(\alpha), \dots, \partial_x^{i-k}(\alpha) )    \cdot D^k_x  \  .
\end{equation}
Now we compute invariants $R_j$, $j = 1, \dots, d$ from the statement of the theorem as the coefficients at $D^j_x$,
restricted on the constructed frame $\rho$.
First, equality~\eqref{eq:1} implies that $m_i$ appears only in $R_j$ with $j \leq i$.
Secondly, equality~\eqref{eq:1} implies that $R_j$ must be a sum of the $m_i$ multiplied by some functional coefficients. The coefficient of
$m_i$ in $R_j$ can be found from~\eqref{eq:1} by the substitution $k=j$. 
In this way, we can obtain the invariants
\begin{align*}
R_j=&\sum_{w=j}^d m_w \binom{w}{j} B_{w-j}(\partial_x(\alpha), \partial_x^2(\alpha), \dots, \partial_x^{w-j}(\alpha) ) \Big|_{frame} =  \\
     =&\sum_{w=j}^d m_w \binom{w}{j} B_{w-j}(-b, -\partial_x(b), - \partial_x^2(b), \dots, -\partial_x^{w-j-1}(b) )
\end{align*}
for $j=1, \dots, d$.
Analogously, one can compute invariants $R_{-j}$, $j = 1, \dots, d$ from the statement of the theorem as the coefficients at $D^j_y$,
restricted on the constructed frame $\rho$.

We compute invariant  $R_0$  from the statement of the theorem as function $\o{M}(1)$ restricted on the constructed frame $\rho$.
It has to be considered separately as this is the only invariant which contains both $a$ and $b$.
\begin{align*}
R_0=&\sum_{w=1}^d m_w \binom{w}{0} B_{w}(\partial_x(\alpha), \partial_x^2(\alpha), \dots, \partial_x^{w}(\alpha) ) \Big|_{frame} \\
    &+ \sum_{w=1}^d m_w \binom{w}{0} B_{w}(\partial_y(\alpha), \partial_y^2(\alpha), \dots, \partial_y^{w}(\alpha) ) \Big|_{frame} + m_0 =  \\
   =&\sum_{w=1}^d m_w B_{w}(-b, -\partial_x(b), - \partial_x^2(b), \dots, -\partial_x^{w-1}(b) ) \\
    &+   m_{-w} B_{w}(-a, -\partial_y(a), - \partial_y^2(a), \dots, -\partial_y^{w-1}(a) ) + m_0  \  .
\end{align*}
\end{proof}
\begin{thm}[Alternative form of Theorem~\ref{thm:1}] All joint differential invariants\footnote{with respect to gauge transformations}
for the pairs $(\o{M}, \o{L})$, where ${ \displaystyle \o{M}=\sum_{i=1}^{d} m_i D_x^i +m_{-i} D^i_y}+m_0$ and
$\o{L} = D_xD_y +a D_x + bD_y +c$, where $m_i \in K, i=-d, \dots, d$ and $a,b,c \in K$, can be generated by the following $2d+3$ basis invariants.
\begin{align*}
 m     &= a_x - b_y\,,    \\
 h     &= ab-c + a_x\,,   \\
 R_0   &= {\displaystyle m_0 +
          \sum_{i=1}^{d} \left( m_i P_i(b)+m_{-i}P_i(a)\right) }\,, \\
 R_j   &= {\displaystyle \sum_{i=0}^{d-j} \binom{j+i}{j} m_{i+j} P_i(b) \ , \ j \geq 1 }\,, \\
 R_{-j}&= {\displaystyle \sum_{i=0}^{d-j} \binom{j+i}{j} m_{-(i+j)} P_i(a) \ , \ j \geq 1 }\,,
\end{align*}
where
\begin{align*}
 P_0(f)&= 1 \ ,          \\
 P_i(f)&= -\Omega^i(f) \ , i \in \mathbb{N}_0 \ ,
\end{align*}
and the linear differential operator $\Omega$ is defined by
\[
 \Omega (f)=
\left\{
\begin{aligned}
& (D_x - b)(f) \ , \; \text{if} \quad f=b \\
& (D_y - a)(f) \ , \; \text{if} \quad f=a \ .
\end{aligned}
\right.
\]
\end{thm}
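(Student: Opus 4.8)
The plan is to deduce this statement directly from Theorem~\ref{thm:1}, rather than to repeat the moving-frame computation. A joint differential invariant of a collection of operators is a function of the coefficients of all of them that is unchanged under the simultaneous gauge action, and this notion does not depend on the order in which the operators are listed; hence the differential algebra of joint invariants of $(\o{M},\o{L})$ coincides with that of $(\o{L},\o{M})$, which by Theorem~\ref{thm:1} is $D$-generated over $K$ by the $2d+3$ functions $m$, $h$, $R_0$, $R_{\pm j}$ ($1\le j\le d$) written there. It therefore suffices to verify that the $2d+3$ functions displayed in the present statement are nothing but those of Theorem~\ref{thm:1} after a harmless re-indexing. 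The invariants $m=a_x-b_y$ and $h=ab-c+a_x$ are word for word the same in the two statements, so only the $R$'s require work.

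Next I would reindex the sum defining $R_j$ in Theorem~\ref{thm:1} by putting $w=i+j$; it becomes $\sum_{i=0}^{d-j}\binom{i+j}{j}\,m_{i+j}\,B_i(-b,-\partial_x(b),\dots,-\partial_x^{\,i-1}(b))$, which agrees term by term with the present $R_j=\sum_{i=0}^{d-j}\binom{i+j}{j}\,m_{i+j}\,P_i(b)$ as soon as one knows the polynomial identity
\[
 P_i(b)=B_i\bigl(-b,-\partial_x(b),\dots,-\partial_x^{\,i-1}(b)\bigr),\qquad i\in\mathbb{N}_0 ,
\]
together with its mirror image $P_i(a)=B_i\bigl(-a,-\partial_y(a),\dots,-\partial_y^{\,i-1}(a)\bigr)$ obtained by the substitution $x\leftrightarrow y$, $b\leftrightarrow a$. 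The same two identities, applied separately to the two halves of the sum that defines $R_0$, make the two forms of $R_0$ coincide as well (this being, exactly as in Theorem~\ref{thm:1}, the only generator in which $a$ and $b$ occur together), and $R_{-j}$ is obtained from $R_j$ by precisely the $x\leftrightarrow y$, $b\leftrightarrow a$, $m_i\leftrightarrow m_{-i}$ symmetry already invoked in the proof of Theorem~\ref{thm:1}. Thus the whole argument is reduced to the pair of Bell polynomial identities above.

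To prove the identity for $b$ I would, enlarging $K$ if necessary as permitted in Section~\ref{sec:pre}, fix $\phi\in K$ with $\phi_x=-b$ and set $g=\exp(\phi)\in K$. Then $g^{-1}D_x\,g=D_x+\phi_x=D_x-b=\Omega$ as an operator in $K[D]$, hence $g^{-1}D_x^{\,i}\,g=\Omega^{\,i}$ for every $i$; applying both sides to the constant function $1$ yields $g^{-1}D_x^{\,i}(g)=\Omega^{\,i}(1)$. On the other hand, the Fa\`{a} di Bruno expansion of $D_x^{\,i}(g)=D_x^{\,i}\exp(\phi)$ already used in the proof of Theorem~\ref{thm:1} gives $g^{-1}D_x^{\,i}(g)=B_i\bigl(\partial_x(\phi),\dots,\partial_x^{\,i}(\phi)\bigr)=B_i\bigl(-b,-\partial_x(b),\dots,-\partial_x^{\,i-1}(b)\bigr)$, since $\partial_x^{\,k}(\phi)=-\partial_x^{\,k-1}(b)$. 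Comparing the two computations, and matching $\Omega^{\,0}(1)=1=P_0(b)$ and $\Omega(1)=-b$ with the definition of $P_i$ through $\Omega$, identifies $\Omega^{\,i}(1)$ with $P_i(b)$; the $a$-version is identical with $D_x-b$ replaced by $\Omega=D_y-a$. I expect the only genuinely delicate point to be this last piece of bookkeeping: $\Omega$ must be read as the fixed first-order operator $D_x-b$ (respectively $D_y-a$) and then iterated, not re-interpreted at each application, and one must keep careful track of the shift of index and of the signs that relate the iterated-$\Omega$ form of $P_i$ to the complete Bell polynomial form; once that is settled, what remains is only a routine rewriting of the formulas of Theorem~\ref{thm:1}.
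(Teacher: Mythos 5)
Your proposal is correct and follows the route the paper itself intends: the paper offers no separate proof of this theorem, presenting it as a mere restatement of Theorem~\ref{thm:1}, and your argument supplies exactly the missing verification --- the reindexing $w=i+j$ in each sum together with the identity $P_i(b)=B_i\bigl(-b,-\partial_x(b),\dots,-\partial_x^{i-1}(b)\bigr)$ (and its $a$-counterpart), which your conjugation argument $e^{-\phi}D_x^{\,i}e^{\phi}=\Omega^{\,i}$ with $\phi_x=-b$, evaluated on the constant $1$ and compared with the Fa\`a di Bruno expansion, establishes correctly. You are also right to single out the one delicate piece of bookkeeping: the printed definition $P_i(f)=-\Omega^i(f)$ is inconsistent as stated with $P_0(f)=1$ and with the displayed examples, and must be read as $P_i(f)=-\Omega^{i-1}(f)=\Omega^{i}(1)$ for $i\ge 1$; with that reading your identification $\Omega^i(1)=P_i(b)$ closes the argument.
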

\begin{rem} Note the explicit forms taken by the first few operators $P_i$:
\begin{align*}
 P_0(f)&= 1\,,          \\
 P_1(f)&= -f\,,          \\
 P_2(f)&= -f_x + f^2 \ ,    \\
 P_3(f)&= -f_{xx} + 3ff_x -f^3\,,          \\
 P_4(f)&= -f_{xxx} + 4ff_{xx} - 6f_xf^2 +3f_x^2 + f^4 \ ,\\
 P_5(f)&= -f_{xxxx}+ 5 f f_{xxx} + 10 f_{xx} f_x - 15 f_x^2 f - 10 f_{xx} f^2 + 10 f_x f^3  - f^5 \ .
\end{align*}
\end{rem}
 \begin{ex} 
Given the operator ${ \displaystyle \o{M}=\sum_{i=1}^{5} m_i D_x^i +m_{-i} D^i_y}$ and 
an operator $\o{L}$
in the form~\eqref{op:L2}, with $m_i \in K, i=-5, \dots, 5$, the following functions form a generating set
of invariants.
\begin{align*}
m     &= a_x - b_y \ ,    \\
 h     &= ab-c + a_x \ ,   \\
R_1 & =-2m_2b+m_1+(-3b_x+3b^2)m_3+(-4b_{xx}+12b_xb-4b^3)m_4 \\
       & +(20b_{xx}b+15b_x^2-30b_xb^2-5b_{xxx}+5b^4)m_5 \ , \\
R_2 & = m_2-3m_3b+(-6b_x+6b^2)m_4+(-10b_{xx}-10b^3+30b_xb)m_5 \ , \\
R_3 & = m_{3}-4m_{4}b+(-10b_{x}+10b^2)m_{5} \ , \\
R_4 & = m_4-5m_5b \ , \\
R_5 & = m_{5} \ , \\
R_{-5} & = m_{-5} \ , \\
R_{-4} & = m_{-4}-5m_{-5}a \ , \\
R_{-3} & = m_{-3}-4m_{-4}a+(-10a_{y}+10a^2)m_{-5} \ , \\
R_{-2} & = m_{-2}-3m_{-3}a+(-6a_{y}+6a^2)m_{-4}+(-10a_{yy}-10a^3+30a_{y}a)m_{-5} \ , \\
R_{-1} & = -2m_{-2}a + m_{-1}+(-3a_y+3a^2)m_{-3}+(-4 a_{yy}+12a_ya - 4 a^3) m_{-4} \\
            &+ (20 a_{yy}a+15a_y^2-30a_y a^2- 5 a_{yyy}+5 a^4)m_{-5}  \ .
\end{align*}
and finally
\newpage 
\begin{align*}
R_0 = & m_0 - b m_1 -  a m_{-1} +(-b_x+b^2)m_2+(a^2-a_{y})m_{-2}
       +(3b_xb-b^3-b_{xx})m_3 \\
      & +(3a_{y}a-a_{yy}-a^3)m_{-3} +(b^4+3b_x^2-b_{xxx}+4b_{xx}b-6b_xb^2)m_4 \\
      & +(-6a_{y}a^2+a^4-a_{yyy}+3a_{y}^2+4a_{yy}a)m_{-4}\\
      & +(5b_{xxx}b-b_{xxxx}-b^5-10b_{xx}b^2+10b_xb^3-15b_x^2b+10b_{xx}b_x)m_5\\
      & +(-15a_{y}^2a+10a_{yy}a_{y}+5a_{yyy}a-a_{yyyy}-a^5-10a_{yy}a^2+10a_{y}a^3)m_{-5}
\ .\end{align*}
\end{ex}

\end{document}